\documentclass[11pt]{article}
\usepackage{amsthm}
\usepackage{amsmath}
\usepackage{amssymb}
\usepackage{geometry}
\usepackage{times}
\usepackage{paralist}
\usepackage[usenames]{color}
\usepackage{enumerate}
\usepackage{graphicx}
\usepackage{lscape}
\usepackage{rotating}
\usepackage{placeins}
\usepackage{authblk}

\renewcommand{\phi}{\varphi}
\newcommand{\set}[1]{\left\{#1\right\}}

\newcommand{\st}{\medspace | \medspace}
\newcommand{\hide}[1]{ }

\renewcommand{\phi}{\varphi}

\newcommand{\Top}{\mathsf{Top}}
\newcommand{\Bot}{\mathsf{Bot}}

\newcommand{\Mid}{\mathsf{Mid}}
\newcommand{\Clique}{\mathsf{Clique}}

\newcommand{\End}{\mathsf{End}}
\newcommand{\Tri}{\mathsf{Tri}}
\newcommand{\A}{\mathsf{A}}
\newcommand{\B}{\mathsf{B}}
\newcommand{\CONGEST}{\ensuremath{\mathsf{CONGEST}\ }}
\newcommand{\LOCAL}{\ensuremath{\mathsf{LOCAL}\ }}

\DeclareMathOperator*{\ex}{ex}


\theoremstyle{plain}
\newtheorem{theorem}{Theorem}[section]

\newtheorem{lemma}[theorem]{Lemma}

\newtheorem{property}{Property}
\newtheorem{corollary}[theorem]{Corollary}

\newtheorem{definition}{Definition}

\renewcommand{\include}{\input}

\begin{document}

\title{Superlinear Lower Bounds for Distributed Subgraph Detection}


\author{Orr Fischer}
\affil{Tel Aviv University\authorcr
  \texttt{orrfischer@mail.tau.ac.il}}
\author{Tzlil Gonen}
\affil{Tel Aviv University\authorcr
  \texttt{tzlilgon@mail.tau.ac.il}}
\author{Rotem Oshman}
\affil{Tel Aviv University\authorcr
  \texttt{roshman@tau.ac.il}}

\maketitle

\abstract{In the distributed subgraph-freeness problem, we are given a graph $H$, and asked to determine whether 
	the network graph contains $H$ as a subgraph or not. Subgraph-freeness is an extremely \emph{local} problem:
	if the network had no bandwidth constraints, we could detect any subgraph $H$ in $|H|$ rounds, by having
	each node of the network learn its entire $|H|$-neighborhood.
	However, when bandwidth is limited, the problem becomes harder.

	Upper and lower bounds in the presence of congestion have been established for several
	classes of subgraphs, including cycles, trees, and more complicated subgraphs.
	All bounds shown so far have been linear or sublinear.
	We show that the subgraph-freeness problem is not, in general, solvable in linear time:
	for any $k \geq 2$, there exists a subgraph $H_k$ such that $H_k$-freeness requires $\Omega( n^{2-1/k} / (Bk) )$
	rounds to solve. Here $B$ is the bandwidth of each communication link.
	The lower bound holds even for diameter-3 subgraphs and diameter-3 network graphs.
	In particular, taking $k = \Theta(\log n)$, we obtain a lower bound of $\Omega(n^2 / (B \log n))$.
}

\section{Introduction}
\label{sec:intro}

In the \emph{subgraph-freeness} problem, we are given a constant-size graph $H$, and the goal is to determine whether the network graph contains a copy of $H$ as a subgraph, or not.
Subgraph-freeness is an extremely \emph{local} problem: in the \LOCAL model of distributed computing, where network nodes can send messages of unrestricted size in each round, the subgraph-freeness for a graph $H$ of size $k$ can be solved in at most $O(k)$ rounds --- we simply have each node collect its entire $k$-neighborhood and check if it contains a copy of $H$.
However, in the \CONGEST model, where the bandwidth on each edge \emph{is} restricted, the picture is less clear.

Subgraph-freeness in the \CONGEST network model has recently received significant attention from the distributed computing community~\cite{ DFO14, triangle_free, ELM17, FGO17, FMORT17, square-free, KorhonenR17,GO17,Censor17}.
Several linear and sublinear upper and lower bounds have been shown for special cases, including triangles and larger cycles~\cite{DFO14,triangle_free,Censor17}; trees, which can be detected in $O(1)$ rounds~\cite{FGO17,FMORT17}; cliques and complete bipartite subgraphs, which can be detected in $O(n)$ rounds~\cite{DFO14}; and more complicated classes of graphs, for which in some cases we can prove a lower bound of the form $\Omega(n^{\delta})$, where $\delta < 1$~\cite{GO17}.
This raises the obvious question --- \emph{can any constant-sized subgraph $H$ be detected in $O(n)$ rounds?}

In this note we show that the answer is negative:
\begin{theorem}
For any $k, n \geq 2$, there is a graph $H_k$ of size $O(k)$ and diameter 3,
such that $H_k$-freeness requires $\Omega(n^{2-1/k}/(Bk))$ rounds in the \CONGEST model,
even when the network diameter is 3.
	\label{thm:hk_lower_bound}
\end{theorem}
Here, $B$ is the bandwidth parameter bounding the number of bits each node can send on each of its edges in one communication round.

Our lower bound provides another ``natural'' problem which has a superlinear lower bound in the \CONGEST model; to our knowledge, previously the only examples of such problems were found in \cite{CHKP17}.
Since the subgraph-freeness problem is also extremely local, we get a separation between the \CONGEST and \LOCAL models.
In particular, if we take $k = \Theta(\log n)$ in our lower bound, we obtain:
\begin{corollary}
	There is a graph $H$ of size $\Theta(\log n)$, such that $H$-freeness can be solved in $O(\log n)$ rounds in the \LOCAL model,
	but requires $\Omega(n^2/(B \log n))$ rounds in the \CONGEST model.
\end{corollary}
This is nearly the largest separation possible: in the \CONGEST model, if the network graph has diameter $D$, then $O(n^2/B+D)$ rounds suffice for each node to learn the entire network graph. We can then check whichever graph property we are interested in locally, and have each node output the answer.

Because our lower bound applies even when the network has diameter 3,
we can increase the separation slightly: we modify the problem so that nodes must check whether the network graph both
\begin{inparaenum}[(a)]
\item has diameter at most 3, and
\item contains a copy of $H$.
\end{inparaenum}
For $H$ of size $\Theta(\log n)$, the modified problem can be solved in 3 rounds in the \LOCAL model, but still requires $\Omega( n^2 / (B \log n))$ rounds in the \CONGEST model by Theorem~\ref{thm:hk_lower_bound}.
\begin{corollary}
	There is problem which can be solved in 3 rounds in the \LOCAL model,
	but requires $\Omega(n^2/(B \log n))$ rounds in the \CONGEST model.
\end{corollary}

\subsection{Overview of the Lower Bound}
We follow the classical approach of obtaining distributed lower bounds by reducing from \emph{two-party communication complexity} (see Chapter 8 of the textbook~\cite{kushilevitz_nisan}):
to show the lower bound on $H$-freeness, we show that if there were a fast algorithm for $H$-freeness in a certain class of networks,
then we could construct from it a communication-efficient two-party protocol for solving a ``hard'' function, in our case the set disjointness~\cite{KS92,Raz92} function.

The key aspect of the construction required for this approach is a \emph{sparse cut}: the two players simulate the distributed algorithm by partitioning the network into two parts, $A$ and $B$, with Alice locally simulating all the nodes in $A$, and Bob simulating the nodes in $B$. To carry out the simulation, the players must send each other the contents of messages sent across the cut between $A$ and $B$;
we require a sparse cut to get a communication-efficient protocol.

Our construction is inspired by the bit-gadget of \cite{ACK16,CHKP17}, which yields a graph with a very sparse cut - only $O(\log{n})$ edges.
Informally, the bit-gadget of~\cite{ACK16,CHKP17} corresponds to the \emph{binary encoding} of the numbers $\set{1,\ldots,n}$.
Here we use a different representation: we represent the numbers $\set{1,\ldots,n}$ as subsets of size $k$ of $\set{1,\ldots,kn^{1/k}}$,
and this gives us a cut of size $O(k n^{1/k})$. We embed a set disjointness instance of size $n^2$ in the graph, which requires $\Omega(n^2)$ bits of total communication; but since our cut is fairly sparse, simulating one round of the protocol requires only $O(k n^{1/k} \cdot B)$ bits,
so we get a lower bound of $\Omega(n^{2-1/k}/(Bk))$ on the number of rounds.

\subsection{Related Work}
The problem of subgraph-freeness (also called \emph{excluded} or \emph{forbidden subgraphs}) 
has been extensively studied in both the centralized and the distributed worlds.
For the general problem of detecting whether a graph $H$ is a subgraph of $G$, 
where both $H$ and $G$ are part of the input, the best known sequential algorithm is exponential \cite{Ullmann76}. 
When $H$ is fixed and only $G$ is the input, the problem becomes solvable in polynomial time. 

In the distributed setting, ~\cite{FGO17} and ~\cite{FMORT17} very recently provided constant-round randomized 
and deterministic algorithms, respectively, for detecting a fixed tree in the \CONGEST model.
Both papers, as well as several others ~\cite{Brakerski2011,triangle_free,square-free, ELM17} ,
also considered more general graphs, but with the exception of trees, they
studied the \emph{property testing} relaxation of the problem,
where we only need to distinguish a graph that is $H$-free from a graph that is \emph{far} from $H$-free.
(In~\cite{ELM17} there is also a property-testing algorithm for trees.)
Here we consider the \emph{exact}
version.

Another recent work~\cite{IzumiLG2017} gave randomized algorithms in the \CONGEST model for triangle 
detection and triangle listing, with round complexity $\widetilde{O}(n^{2/3})$ 
and $\widetilde{O}(n^{3/4})$, respectively, and also established a 
lower bound of $\widetilde{\Omega}(n^{1/3})$ on the round complexity of triangle listing.  
There is also work on testing triangle-freeness in the congested clique model~\cite{CHKKLPJ15,DLP12} and in other, less directly related distributed models.

As for lower bounds on $H$-freeness in the \CONGEST model, the only ones in the literature (to our knowledge) are for cycles~\cite{DFO14},
and the reductions from~\cite{GO17}, which construct hard graphs from other hard graphs by replacing their vertices or their edges 
with other graphs.
(In~\cite{DFO14} there are lower bounds for other graphs, in a broadcast variant of the \CONGEST model where nodes are required to send the \emph{same}
message on all their edges.)
For any fixed $k > 3 $, there is a polynomial lower bound for detecting the $k$-cycle $C_k$ in the \CONGEST model:
it was first presented by ~\cite{DFO14}, which showed that $\Omega(\ex(n,C_k)/B)$ rounds are required,
where $\ex(n,C_k)$ is the largest possible number of edges in a $C_k$-free graph over $n$ vertices (see~\cite{Furedi2013} for a survey on extremal graphs with forbidden subgraphs).
In particular, for odd-length cycles, the lower bound of ~\cite{DFO14} is nearly linear.
Very recently,~\cite{KorhonenR17} improved the lower bound for even-length cycles to $\Omega(\sqrt{n} / B)$.
All lower bounds mentioned above are linear or sublinear.

In a recent work \cite{CHKP17}, the first near-quadratic lower bounds (and indeed the first superlinear lower bounds) in CONGEST were shown. The problems addressed in~\cite{CHKP17} include some NP-hard problems such as minimum vertex cover and graph coloring, as well as a weighted variant of $C_8$-freeness, called \emph{weighted cycle detection}: given a weight $W \in [0,\mathrm{poly}(n)]$, the goal is to determine whether the graph has a cycle of length $8$ and weight exactly $W$. 

\section{Preliminaries}

\paragraph{Notation.}
We let $V(G), E(G)$ denote the vertex and edge set of graph $G$, respectively,
If $\sigma : V(H) \rightarrow V(G)$ is a mapping of vertices from $H$ into vertices of $G$,
then
\begin{itemize}
	\item For a subset $U \subseteq V(H)$, we let $\sigma(U) = \set{ \sigma(u) \st u \in U}$,
		and
	\item For a subgraph $H'$ of $H$, we let $\sigma(H')$ be the subgraph of $G$ induced by $\sigma( V(H') )$.
\end{itemize}

\paragraph{The \CONGEST model.}
The \CONGEST model is a synchronous network model, where computation proceeds in \emph{rounds}.
In each round, each node of the network may send $B$ bits on each of its edges, and these
messages are received by neighbors in the current round.
Typically, $B$ is taken to be polylogarithmic in the size $n$ of the network graph.

We are interested in the \emph{subgraph-freeness} problem, defined as follows:
\begin{definition}[Subgraph freeness]
	Fix a graph $H$.
In the \emph{$H$-freeness} problem,
the goal is to determine whether the input graph $G$ contains a copy of $H$ as a subgraph or not,
that is, whether there is a subset $U \subseteq V(G)$ such that the subgraph induced by $G$ on $U$ is isomorphic to $H$.
\end{definition}
We say that a distributed algorithm $A$ \emph{solves $H$-freeness with success probability $p$} if
\begin{itemize}
	\item When $A$ is executed in a graph containing a copy of $H$, the probability that all nodes accept is at least $p$.
	\item When $A$ is executed in an $H$-free graph, the probability that at least one node rejects is at least $p$.
\end{itemize}
We typically assume constant $p$, e.g., $p = 2/3$.

\paragraph{Two-party communication complexity.}
Our lower bound is shown by reduction from \emph{two-party communication complexity}:
we have two players, Alice and Bob, with private inputs $X,Y$, respectively.
The players wish to compute a joint function $f(X,Y)$ of their inputs, and we are interested in the total number of bits they must exchange to do so (see the textbook~\cite{kushilevitz_nisan} for more background on communication complexity).

In particular, we are interested in the \emph{set disjointness} function, where the inputs $X,Y$ are interpreted as subsets $X,Y \subseteq [n]$, and the goal of the players is to determine whether $X \cap Y = \emptyset$.
The celebrated lower bound of~\cite{KS92,Raz92} shows that even for randomized communication protocols, the players must exchange $\Omega(n)$ bits to solve set disjointness with constant success probability.

\paragraph{Vertex names.}
Our graph constructions are somewhat complicated. To avoid a multitude of subscripts and superscripts,
we assign vertices names that are tuples, indicating which ``logical part'' of the graph they belong to 
and their index inside that part.
We also adopt the following convention with respect to vertex names:
if $X = \set{ (X_1, \ldots), \ldots, (X_{\ell},\ldots) }$ is a set of vertices, then $X'$ denotes the same vertices but using $X_i'$ instead of $X_i$ in the name:
$X' = \set{ (X_1', \ldots), \ldots, (X_{\ell}',\ldots) }$ (only the first coordinate of each vertex name is modified, the rest remain the same).
Finally, we ``flatten'' the parentheses nesting of tuples, so that instead of, e.g., $(1, (2,3))$ we always write $(1,2,3)$.

\section{The Lower Bound}

Our lower bound shows that for any $k \geq 2$, there is a graph on $O(k)$ vertices which requires $\Omega(n^{2-1/k}/(Bk))$ rounds to detect.
We do not require $k$ to be constant; it may grow with $n$.
Furthermore, $H_k$ has diameter 3, and the lower bound applies even when the network in which we wish to detect $H_k$ also have diameter 3.

We begin by constructing the graph $H_k$.

\subsection{Informal Description of $H_k$}
Informally, $H_k$ has several ``parts'' (see Fig.~\ref{fig:Hk} for an illustration of $H_k$):
\paragraph{Cliques.}
	We put in five cliques, one of each size $s = 6,\ldots,10$.
		We pick one special vertex $v_s$ from each clique, and connect $\set{ s_6,\ldots,s_{10}}$ in a 5-clique.
		Each remaining vertex of $H_k$ is connected to exactly one special clique vertex $v_s$,
		and no other clique vertices.

		This serves two purposes:
		\begin{enumerate}
			\item It reduces the diameter of $H_k$ to 3: each vertex in $H_k$ is connected to some special clique vertex $v_s$,
				and all special clique vertices are connected to each other.
			\item Each ``part'' of $H_k$ (except for the cliques themselves) is connected to exactly 
				one $s$-clique for $s \in \set{6,\ldots,10}$,
				so the cliques serve to ``mark'' the different parts of $H_k$.

				When we construct the network graph $G$ in which we show the lower bound on $H_k$-freeness,
				we will make sure $G$ also contains exactly one copy of each $s$-clique for $s = 6,\ldots,10$,
				so that any isomorphism mapping $H_k$ into $G$ must map the $s$-clique of $H_k$ onto the $s$-clique in $G$.
				The ``parts'' of $G$ will echo the ``parts'' of $H_k$, and the connections between the cliques
				and the other vertices will force any isomorphism from $H_k$ into $G$ to respect 
				this logical partition.
		\end{enumerate}
\paragraph{Top and bottom.}
	The remainder of $H_k$ consists of two identical copies of a graph $H$.
		We call the two copies ``top'' and ``bottom'', respectively.
	The subgraph $H$ consists of $k$ triangles $\Tri_1,\ldots,\Tri_k$, and two additional ``endpoint nodes'', which we call $\A$ and $\B$.
	In each triangle $\Tri_i$, we have three vertices denoted $(i, \A), (i,\B), (i, \Mid)$.
	Endpoint $\A$ is connected to all triangle vertices of the form $\set{ (i, \A) \st i \in [k]}$,
	endpoint $\B$ is connected to all triangle vertices of the form $\set{ (i,\B) \st i \in [k]}$,
	and the ``middle vertices'' of the triangles are not connected to either endpoint. Finally, the top and bottom A endpoints are connected by an edge, and the top and bottom B endpoints are connected by an edge.

\begin{sidewaysfigure}[h!]
 \centering
  \includegraphics[width=0.9\textwidth]{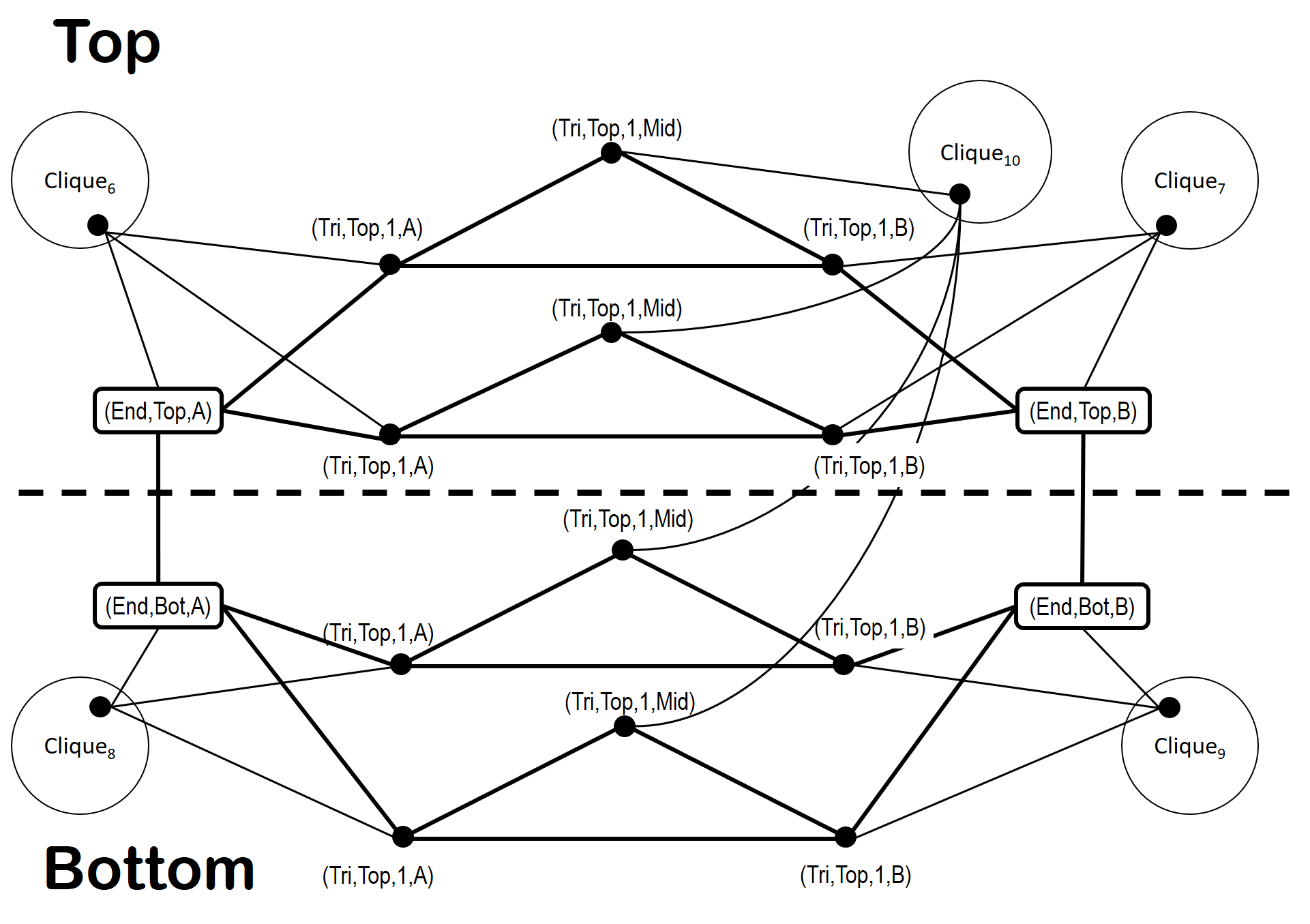}
  \caption{The graph $H_k$ for $k = 2$. 
	  In addition to the edges shown here, the five clique nodes $\set{(\Clique_s, 0) \st 6 \leq s \leq 10}$ are connected to each other; we omit these edges from the figure for clarity.}
  \label{fig:Hk}
  \end{sidewaysfigure}
	
\subsubsection{Our naming scheme}
We name each vertex of $H_k$ by a tuple indicating its role (a clique node, an endpoint, or a triangle node),
whether it belongs to the top or the bottom part, its ``orientation'' ($\A / \B / \Mid$),
and its index.

\FloatBarrier 

\subsection{Formal Definition of $H_k$}
	For any integer $k \geq 1$, the graph $H_k$ contains the following components%
	\footnote{We use ``component'' simply to mean a logical unit, not a maximal connected component of the graph.}
	\begin{itemize}
		\item Two edges, denoted $\set{ (\End, \Top, \A), (\End, \Bot, \A)}$ and $\set{( \End, \Top, \B), (\End, \Bot, \B)}$ respectively,
		\item For each $s \in \set{ 6, 7, 8, 9, 10}$, an $s$-clique denoted $\Clique_s = \set{\Clique_s} \times [s]$,
		\item $2k$ triangles indexed by $\set{\Top, \Bot} \times [k]$, with triangle $(S, i) \in \set{\Top, \Bot} \times [k]$ denoted
			\begin{equation*}
				\Tri_{S,i} = \set{ (\Tri, S, i, X) \st X \in \set{ \A, \B, \Mid}}.
			\end{equation*}
			For $X \in \set{\A, \B, \Mid}$, we let  $\Tri_{S,X} = \set{ (\Tri, S, i, X) \st i \in [k]}$.
	\end{itemize}
	For convenience, we associate the various cliques with ``directions'' (top/bottom $\times$ $\A$/$\B$/$\Mid$) as follows: we denote
	\begin{equation*}
		c_{S, X} = 
		\begin{cases}
			6, \quad \text{if $S = \Top, X = \A$},\\
			7, \quad \text{if $S = \Top, X = \B$},\\
			8, \quad \text{if $S = \Bot, X = \A$},\\
			9, \quad \text{if $S = \Bot, X = \B$},\\
			10, \quad \text{if $X = \Mid$}.
		\end{cases}
	\end{equation*}
	The components are connected to each other by the following additional edges:
	\begin{enumerate}[(E1)]
		\item For each ``direction'' $(S,P) \in \set{ \Top, \Bot } \times \set{ \A, \B}$, the endpoint $(\End, S, P)$ is connected to vertex $(\Clique_{c_{S,P}}, 0)$ of the $c_{S,P}$-clique.
		\item For each ``direction'' $(S,P) \in \set{ \Top, \Bot } \times \set{ \A, \B, X}$, all triangle vertices in $\Tri_{S,X}$ are connected to vertex $(\Clique_{c_{S,P}}, 0)$ of the $c_{S,P}$-clique.
		\item The 0-vertices of the five cliques are connected to each other, that is, we connect all vertices in $\set{ (\Clique_s, 0) \st s \in \set{ 6,\ldots,10}}$ to each other.
		\item For each $S \in \set{ \Top, \Bot}, P \in \set{\A,\B}$, the endpoint $(\End, S, P)$ is connected to all $k$ triangle nodes $(\Tri, S, i, P)$ for $i \in [k]$.
	\end{enumerate}

The graph $H_k$ has the following properties:
	\begin{property}
		The size of $H_k$ is $30+6k = O(k)$.
	\end{property}
	\begin{property}
		For each $k \geq 2$, the diameter of $H_k$ is 3.
	\end{property}
	\begin{proof}
		Each vertex of $H_k$ is connected by an edge to a vertex $(\Clique_s, 0)$ for some $s \in \set{ 6,\ldots,10}$, and in turn, the five vertices $(\Clique_s, 0)$ for $s \in \set{6,\ldots,10}$ are connected to each other.
		(The diameter is not smaller than 3, because, e.g., of the triangles.)
	\end{proof}

Next we describe the family of graphs in which we show that $H_k$-detection is hard.

\subsection{The Lower Bound Family $\mathcal{G}_{k,n}$}

Let $m = k\lceil n^{1/k} \rceil$.
Fix an ordering $Q_1,\ldots, Q_N$ of the subsets of size $k$ of $[ m ]$,
where $N = { m \choose k } = { k\lceil n^{1/k} \rceil \choose k}$.
Note that
\begin{equation*}
	N = { k \lceil n^{1/k} \rceil \choose k } \geq \left( \frac{ k n^{1/k} } {k}  \right)^k = \frac{ k^k n}{k^k} = n.
\end{equation*}
For each $i \in [N]$, let us denote 
$Q_i = \set{q_i^1,\ldots,q_i^k}$.
(We will only use the first $n$ subsets, $Q_1,\ldots,Q_n$.)

Our lower bound graph family $\mathcal{G}_{k,n}$ is defined as follows (see Fig.~\ref{fig:G}):
\begin{definition}[The graph family $\mathcal{G}_{k,n}$]
	Fix integers $k, n$.
	A graph $G$ is in the family $\mathcal{G}_{k,n}$ 
	if it has the following structure:
	\begin{itemize}
		\item The graph contains the following components:
			\begin{itemize}
				\item $n$ ``potential top endpoints'' and ``potential bottom endpoints'' of $H_k$:
					\begin{equation*}
						\End' \times [n].
					\end{equation*}
					For $S \in \set{\Top, \Bot}$ and $P \in \set{ \A, \B}$, we denote
					\begin{equation*}
						\End'_{S,P} = \set{ (\End', S, P, i) \st i \in [n]}.
					\end{equation*}
				\item $2m$ triangles, indexed by $\set{ \Top, \Bot} \times [m]$: for each $(S, i) \in \set{\Top,\Bot} \times [m]$,
					\begin{equation*}
						\Tri'_{S,i} = \set{ (\Tri', S, i, X ) \st X \in \set{\A,\B,\Mid}, i \in [m]}.
					\end{equation*}
					For $S \in \set{\Top, \Bot}$ and $X \in \set{ \A, \B, \Mid}$, we denote
					\begin{equation*}
						\Tri'_{S, X} = \set{ (\Tri', S, i, X) \st i \in [m]}.
					\end{equation*}
				\item Copies of each of the cliques in $H_k$: for each $s \in \set{6,7,8,9,10}$,
					\begin{equation*}
						\Clique_s' = \set{ (\Clique'_s, i) \st i \in [s] }.
					\end{equation*}
			\end{itemize}
		\item The graph contains the following edges between the components, and no other edges:
			\begin{itemize}
				\item For each ``direction'' $(S,P) \in \set{ \Top,\Bot} \times \set{\A,\B}$, all vertices in $\End'_{S,P}$ are connected to vertex $(\Clique'_{c_{S,P}}, 0)$ of the $c_{S,P}$-clique.
				\item For each ``direction'' $(S,X) \in \set{ \Top,\Bot} \times \set{\A,\B,\Mid}$, all vertices in $\Tri'_{S,X}$ are connected to vertex $(\Clique'_{c_{S,P}}, 0)$ of the $c_{S,P}$-clique.
				\item The 0-vertices of the five cliques are connected to each other, that is, we connect all vertices in $\set{ (\Clique_s', 0) \st s \in \set{ 6,\ldots,10}}$ to each other.
				\item For each $S \in \set{ \Top, \Bot}, P \in \set{\A,\B}$,
					and $i \in [n]$,
					the endpoint-copy $(\End', S, P, i)$ is connected to each of the $k$ triangle nodes $(\Tri, S, j, P)$ where $j \in Q_i$.
				\item Finally, for each $P \in \set{\A,\B}$, the graph may contain an arbitrary subset of the edges
					in $\End'_{\Top, P} \times \End'_{\Bot, P}$.
			\end{itemize}
	\end{itemize}
\end{definition}

\begin{sidewaysfigure}[h!]
 \centering
  \includegraphics[width=0.9\textwidth]{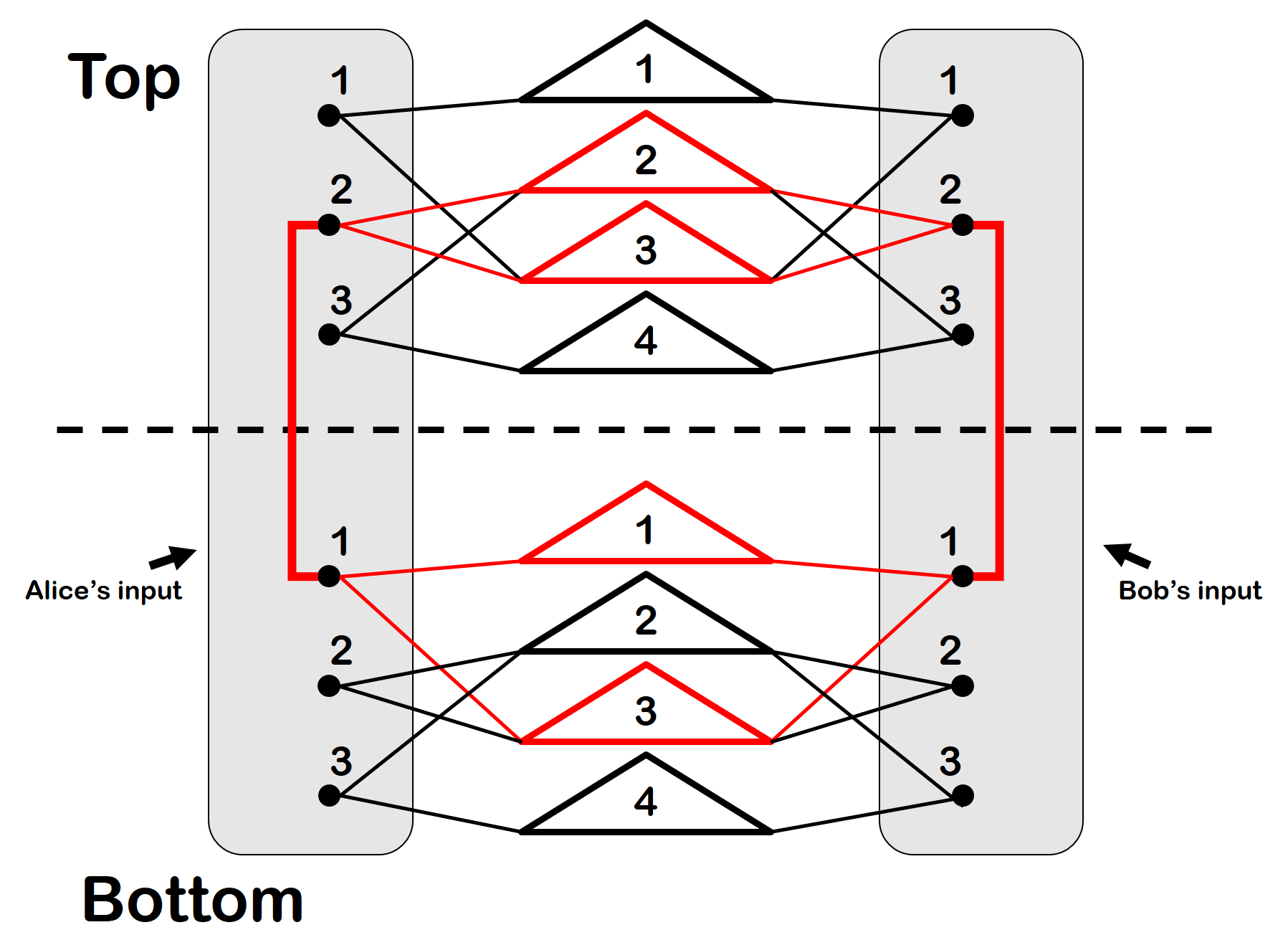}
  \caption{The graph $G_{X,Y} \in \mathcal{G}_{k,n}$ for $n = 3, k = 2$, so that $m = k \lceil n^{1/k} \rceil = 2 \cdot \lceil 3^{1/2} \rceil = 4$.
  For clarity, the figure omits the five cliques $\set{ \Clique'_s \st 6 \leq s \leq 10}$ and their edges.
  The inputs $X,Y$ whose graph $G_{X,Y}$ is depicted here both include $(2,1)$ (that is $(2,1) \in X \cap Y$).
  As a result, a copy of $H_k$ appears in $G_{X,Y}$, highlighted in red.
  }
  \label{fig:G}
  \end{sidewaysfigure}
The construction has the following properties.

\begin{property}
	Any graph in $\mathcal{G}_{k,n}$ has diameter 3.
\end{property}
\begin{proof}
	As with $H_k$, each vertex is connected to at least one of the clique vertices, $(\Clique'_s, 0$ for some $s \in \set{6,7,8,9,10}$,
	and these vertices are in turn all connected to each other.
\end{proof}

\begin{property}
	The size of each graph in $\mathcal{G}_{k,n}$
	is $4n + 6m + 30 \leq 4n + 6k(n^{1/k}+1) + 30 = O(n)$.
\end{property}

\begin{property}
	Fix $G \in \mathcal{G}_{k,n}$, and let $U \subseteq V(G)$ be a set of $s$ nodes, $s \in \set{6,7,8,9,10}$,
	such that the subgraph induced on $U$ by $G$ is the $s$-clique.
	Then $U \subseteq \Clique'_{s'}$ where $s' \geq s$.
	\label{prop:clique}
\end{property}

\begin{property}
	Any triangle in $G \in \mathcal{G}_{k,n}$ either includes some vertex of $\bigcup_{s =6}^{10} \Clique'_s$,
	or it is one of the triangles $\Tri'_{S,i}$ for $S \in \set{\Top,\Bot}, i \in [m]$.
	\label{prop:triangle}
\end{property}

\begin{property}
	Fix $G \in \mathcal{G}_{k,n}$ and $(S,P) \in \set{ \Top,\Bot } \times \set{ \A,\B}$.
	If $v \in V(G) \setminus \bigcup_{s = 6}^{10} \Clique'_s$ is a non-clique vertex adjacent to at least two
	distinct vertices in $\Tri'_{S,P}$,
	then $v \in \End'_{S,P}$.
	\label{prop:two_tri}
\end{property}
\begin{proof}
	Since we have ruled out the clique vertices,
	$\bigcup_{s = 6}^{10} \Clique'_s$,
	the remaining possibilities are the endpoint-copies, $\bigcup_{S,P} \End'_{S,P}$, and the triangle vertices.
	However, there are no edges in $G$ between distinct triangles, and inside each triangle we only have one vertex from $\Tri'_{S,P}$,
	while $v$ has edges to two vertices from $\Tri'_{S,P}$.
	Therefore $v$ cannot be a triangle vertex, and must be an endpoint-copy, $v = (\End', S', P', i) \in \End'_{S',P'}$.
	But $v$ would only be connected to vertices from $\Tri'_{S,P}$ in $G$ if $S' = S$ and $P' = S$,
	so in fact $v \in \End'_{S,P}$, as we claimed.
\end{proof}

\begin{lemma}
	A graph $G \in \mathcal{G}_{k,n}$ contains $H_k$ as a subgraph iff there exist $i_{\Top},i_{\Bot} \in [n]$
	such that
	\begin{equation*}
		( \End', \Top, \A, i_{\Top}), ( \End', \Bot, \A, i_{\Bot}) \in E(G)
	\end{equation*}
	and
	\begin{equation*}
		( \End', \Top, \B, i_{\Top}), (\End', \Bot, \B, i_{\Bot}) \in E(G).
	\end{equation*}
	\label{lemma:subgraph}
\end{lemma}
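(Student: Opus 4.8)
The statement is an ``iff'', so the plan splits into two parts.

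\paragraph{The ``if'' direction.} Given $i_\Top, i_\Bot$ witnessing the four edges, I would exhibit an explicit injective edge-preserving map $\sigma : V(H_k) \to V(G)$. The natural choice is: send each clique identically, $(\Clique_s, j) \mapsto (\Clique'_s, j)$; send the endpoints by $(\End, S, P) \mapsto (\End', S, P, i_S)$ for $P \in \set{\A, \B}$; and, writing $Q_{i_S} = \set{ q_{i_S}^1, \ldots, q_{i_S}^k}$, send the triangles by $(\Tri, S, i, X) \mapsto (\Tri', S, q_{i_S}^i, X)$. Injectivity is clear, since the four kinds of components of $H_k$ land in four disjoint kinds of components of $G$ and $i \mapsto q_{i_S}^i$ is injective (as $|Q_{i_S}| = k$). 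I would then check edge-preservation class by class: clique-internal edges and triangle edges are immediate; the connecting edges (E1)--(E3) follow because $\sigma$ fixes the clique-$0$-vertices; edge class (E4) uses $q_{i_S}^i \in Q_{i_S}$; and the two endpoint edges of $H_k$ map precisely to the two hypothesized edges of $G$. None of this should be difficult.

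\paragraph{The ``only if'' direction.} Here I assume an injective edge-preserving $\sigma : V(H_k) \to V(G)$ is given and track where everything must go. \emph{Step 1 --- cliques.} Running Property~\ref{prop:clique} by downward induction on $s$ from $10$ to $6$, and using that the $\Clique_s$ (resp.\ $\Clique'_s$) are pairwise disjoint while $\sigma$ is injective, I would show $\sigma(\Clique_s) = \Clique'_s$ for all $s$; then, since (E3) forces $\sigma((\Clique_s, 0))$ to have a neighbour outside its own clique copy and $(\Clique'_s, 0)$ is the only such vertex of $\Clique'_s$, also $\sigma((\Clique_s, 0)) = (\Clique'_s, 0)$. \emph{Step 2 --- triangles.} By Step 1 and injectivity, each $\sigma(\Tri_{S,i})$ is a triangle of $G$ disjoint from $\bigcup_s \Clique'_s$, so Property~\ref{prop:triangle} says it equals some $\Tri'_{S', j}$. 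I would then identify $S'$ and the orientation by reading off, for each of $(\Tri, S, i, \A), (\Tri, S, i, \B), (\Tri, S, i, \Mid)$, which clique-$0$-vertex it is adjacent to via (E2): since the values $c_{S, \A}, c_{S, \B}, 10$ are distinct and are exactly the $c$-values attached to the $\A$-, $\B$-, $\Mid$-slots of the $S$-side triangles of $G$ (and of no other triangle vertices), this pins $S' = S$ and shows $\sigma$ orientation-preserving on $\Tri_{S,i}$: $\sigma((\Tri, S, i, X)) = (\Tri', S, \tau_S(i), X)$, with $\tau_\Top, \tau_\Bot : [k] \to [m]$ injective.

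\emph{Step 3 --- endpoints.} The vertex $\sigma((\End, S, P))$ is not a clique vertex (Step 1 and injectivity) and, via edge class (E4) and Step 2, is adjacent to the $k \geq 2$ distinct vertices $\set{ (\Tri', S, \tau_S(i), P) \st i \in [k] } \subseteq \Tri'_{S, P}$; Property~\ref{prop:two_tri} then puts $\sigma((\End, S, P)) \in \End'_{S, P}$, say $\sigma((\End, S, P)) = (\End', S, P, i_{S,P})$. \emph{Step 4 --- collapsing the two indices on each side.} In $G$ the only $(\Tri', S, \cdot, P)$-neighbours of $(\End', S, P, i_{S,P})$ are those indexed by $Q_{i_{S,P}}$, so Step 3 gives $\set{ \tau_S(i) \st i \in [k]} \subseteq Q_{i_{S,P}}$, hence equality by cardinality. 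Therefore $Q_{i_{S,\A}} = \set{\tau_S(i) \st i \in [k]} = Q_{i_{S,\B}}$, and since $Q_1, \ldots, Q_n$ are distinct, $i_{S,\A} = i_{S,\B} =: i_S$. Applying $\sigma$ to the two endpoint edges of $H_k$ then yields exactly $\set{ (\End', \Top, \A, i_\Top), (\End', \Bot, \A, i_\Bot)} \in E(G)$ and $\set{(\End', \Top, \B, i_\Top), (\End', \Bot, \B, i_\Bot)} \in E(G)$, as required.

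\paragraph{Where the difficulty lies.} The routine parts are the two verifications against the edge lists. The real content is in Step 2, where the deliberately chosen clique sizes $c_{S,X}$ are what forbid every ``wrong'' placement of a triangle (wrong side, wrong orientation), and in Step 4, where the choice of the $Q_i$ as \emph{distinct} $k$-subsets of $[m]$ is exactly what forces the $\A$- and $\B$-endpoints on a given side onto a common index. I expect Step 2 to require the most care in the write-up.
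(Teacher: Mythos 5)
Your proposal is correct and follows essentially the same route as the paper's own proof: the same explicit embedding for the ``if'' direction, and the same chain (cliques via Property~\ref{prop:clique}, triangles via Property~\ref{prop:triangle} with side/orientation pinned by the distinct clique sizes $c_{S,X}$, endpoints via Property~\ref{prop:two_tri}, and index-collapse via distinctness of the $Q_i$) for the ``only if'' direction. Your explicit note that $\sigma((\Clique_s,0)) = (\Clique'_s,0)$ is a detail the paper leaves implicit but does rely on.
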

\begin{proof}
	First, suppose there exist such $i_{\Top},i_{\Bot} \in [n]$.
	Then $G$ contains a copy of $H_k$, witnessed by the isomorphism $\sigma$ that maps
	\begin{itemize}
		\item Each endpoint $(\End, S, P)$ of $H_k$ to the endpoint-copy $(\End', S, P, i_S)$ in $G$.
		\item Each clique $\Clique_s$ of $H_k$ to the copy $\Clique'_s$ in $G$, with $(\Clique_s,i)$ mapped to $(\Clique', i)$ for each $i \in [s]$.
		\item Each triangle $\Tri_{S, j}$ of $H_k$ to the triangle $\Tri'_{S, q_{i_S}^j}$ of $G$,
			with each vertex $(\Tri, S, j, X)$ mapped to $(\Tri', S, q_{i_S}^j, X)$ for each $X \in \set{\A,\B,\Mid}$.
	\end{itemize}
	Each triangle $\Tri_{s,j}$ of $H_k$ is mapped onto a triangle of $G$, and each clique $\Clique_s$ of $H_k$ is mapped onto a clique of $G$.
	The two endpoint-edges of $H_k$ are also mapped onto edges of $G$, because for each $P \in \set{ \A, \B}$,
	\begin{equation*}
		\sigma( \set{ (\End, \Top, P), (\End, \Bot, P) } ) = \set{ (\End', \Top, P, i_{\Top}), (\End', \Bot, P, i_{\Bot})},
	\end{equation*}
	which is an edge of $G$ by assumption.
	Let us go over the additional edges of $H_k$ and show that $\sigma$ maps them onto edges of $G$:
	\begin{enumerate}[(E1)]
		\item Edges of the form $e = \set{ (\End, S, P), (\Clique_{c_{S,P}}, 0)}$: 
			then
			\begin{equation*}
				\sigma(e) \in \End'_{S,P} \times \set{ \Clique'_{c_{S,P}}, 0 } \subseteq E(G).
			\end{equation*}
		\item Edges of the form $e = \set{ (\Tri, S, i, X), (\Clique_{c_{S,X}}, 0)}$:
			then
			\begin{equation*}
				\sigma(e) \in \Tri'_{S,X} \times \set{ \Clique'_{c_{S,P}}, 0 } \subseteq E(G).
			\end{equation*}
		\item Edges of the form $e = \set{ (\Clique_s, 0), (\Clique_t, 0)}$ for $s, t \in \set{6,\ldots,10}$:
			then
			\begin{equation*}
				\sigma(e) = \set{ (\Clique_s', 0), \Clique_t', 0)} \in E(G).
			\end{equation*}
		\item Edges of the form $e = \set{ (\End, S, P), (\Tri, S, j, P) }$ where $j \in [k]$:
			then
			\begin{equation*}
				\sigma(e) = \set{ (\End', S, P, i_S), (\Tri', S, q_{i_S}^j, P)} \in E(G).
			\end{equation*}
	\end{enumerate}

	For the other direction, suppose that $G$ does contain $H_k$ as a subgraph, and let $\sigma$ be an isomorphism mapping $H_k$
	onto its copy in $G$.

	By Property~\ref{prop:clique}, $\sigma$ must map each $s$-clique $\Clique_s$ in $H_k$ into some $s'$-clique $\Clique'_{s'}$, with $s' \geq s$.
	In particular, we must have $\sigma(\Clique_{10}) = \Clique'_{10}$, and therefore also $\sigma(\Clique_9) = \Clique'_9$,
	and so on, so that for each $s \in \set{6,\ldots,10}$ we have $\sigma(\Clique_s) = \Clique'_s$.

	Next, by Property~\ref{prop:triangle}, since we have already ``used up'' all the vertices in $\bigcup_{s = 6}^{10} \Clique'_s$,
	we know that
	$\sigma$ must map each triangle $\Tri_{S, i}$ of $H_k$
	onto some triangle $\Tri'_{S', i'}$ of $G$.
	Furthermore, if $\sigma( \Tri, S, i, X ) = (\Tri', S', i', X')$, then
	\begin{enumerate}
		\item $S = S'$: in $H_k$, vertex $(\Tri, S, i, \A) \in \Tri_{S,i}$ is connected to $\Clique_{c_{S,\A}}$ and none of the other cliques,
			and in $G$ the same holds for vertex $(\Tri, S', i', \A)$ and $\Clique_{c_{S', \A}}$.
			Since we know that $\sigma(\Clique_{c_{S,\A}}) = \Clique'_{c_{S,\A}}$,
			we must have $S = S'$, that is, $\Tri_{S,i}$ must be mapped onto $\Tri'_{S,i'}$ for some $i'$.
		\item $X = X'$: for a similar reason --- in $H_k$, vertex $(\Tri, S, i, X)$ is connected to $\Clique_{c_{S,X}}$ and no other clique,
			and the same in $G$ with $(\Tri, S', i', X')$ and $\Clique'_{c_{S',X'}}$;
			if $X' \neq X$ then $c_{S,X} \neq c_{S',X'}$,
			and since $\sigma(\Clique_{c_{S,X}}) = \Clique'_{c_{S,X}}$,
			we must have $X = X'$.
	\end{enumerate}
	So, for each triangle $\Tri_{S,i}$ of $H_k$,
	there is some index in $[m]$, which we abuse notation by denoting $\sigma(S,i)$,
	such that $\sigma( \Tri, S, i, X ) = (\Tri', S, \sigma(S,i), X)$.
	Let us denote, for $S \in \set{ \Top, \Bot}$,
	\begin{equation*}
		T_S = \set{ \sigma(S,i) \st i \in [k] }.
	\end{equation*}
	That is, $T_S$ is a collection of $k$ indices, $T_S \subseteq [m]$, indicating which triangles of $G$
	the $S$-triangles of $H_k$ were mapped to.

	Now consider the four endpoints of $H_k$.
	Each endpoint $(\End, S, P) \in V(H_k)$ is connected in $H_k$ to $k \geq 2$ triangle nodes, $\Tri_{S,P}$.
	We have seen that $\sigma(\Tri_{S,P}) \subseteq \Tri'_{S,P}$.
	Therefore, in $G$, $\sigma(\End, S, P)$ is connected to $k \geq 2$ distinct nodes from $\Tri'_{S,P}$;
	by Property~\ref{prop:two_tri},
	we must have $\sigma(\End, S, P) = (\End', S, P, i)$ for some $i \in [n]$.
	Let us further abuse notation by denoting $\sigma(\End,S,P) = i$ in this case.

	The only nodes in $\Tri'_{S,P}$ to which the endpoint-copy $(\End', S, P, i)$ is connected
	are the $k$ triangle vertices $(\Tri', S, j, X)$ where $j \in Q_i$.
	Therefore, $\sigma$ must map the $k$ triangles $\set{ \Tri_{S,i} \st i \in [k]}$ onto the $k$ triangles
	$\set{ \Tri'_{S,j} \st j \in Q_i }$;
	in other words, for each $S \in \set{\Top, \Bot}$, we must have $T_S = Q_{\sigma(\End, S, P)}$,
	and this holds for both $P = \A$ and $P = \B$,
	so in particular, $Q_{\sigma(\End, S, \A)} = Q_{\sigma(\End, S, \B)}$.
	But this implies that for both $S \in \set{\Top, \Bot}$ we have $\sigma(\End, S, \A) = \sigma(\End, S, \B)$,
	because for any $j_1 \neq j_2$ we have $Q_{j_1} \neq Q_{j_2}$.
	Let $i_{\Top} = \sigma(\End, \Top, \A) = \sigma(\End, \Top, \B)$
	and let
	$i_{\Bot} = \sigma(\End, \Bot, \A) = \sigma(\End, \Bot, \B)$.

	To conclude the proof, it remains to observe that since in $H_k$ the top-$\A$ and bottom-$\A$ endpoints are connected to each other
	(that is, $\set{ (\End, \Top, \A), (\End, \Bot, \A)} \in E(H_k)$),
	$\sigma$ must map these endpoints onto endpoint-copies that have an edge in $G$:
	\begin{equation*}
		\set{ (\End', \Top, \A, i_{\Top}), (\End', \Bot, \A, i_{\Bot})} \in E(G).
	\end{equation*}
	Similarly,
	since in $H_k$ the top-$\B$ and bottom-$\B$ endpoints are connected to each other
	we must have 
	\begin{equation*}
		\set{ (\End', \Top, \B, i_{\Top}), (\End', \Bot, \B, i_{\Bot})} \in E(G).
	\end{equation*}
	Therefore there exist indices $i_{\Top}, i_{\Bot}$ as required.

\end{proof}

\FloatBarrier

\begin{proof}[Proof of Theorem~\ref{thm:hk_lower_bound}]
	The proof is by reduction from set disjointness on the universe $[n]^2$.

	Fix an algorithm $A$ for solving $H_k$-freeness in the class $\mathcal{G}_{k,n}$,
	and let us construct from $A$ a protocol for disjointness.
	Given inputs $X,Y \subseteq [n]^2$, Alice and Bob construct a graph $G_{X,Y} \in \mathcal{G}_{k,n}$.
	The only freedom when constructing a graph in $\mathcal{G}_{k,n}$ is the choice of the edges we take
	from $\End'_{\Top,\A} \times \End'_{\Bot, \A}$ and from $\End'_{\Top,\B} \times \End'_{\Bot, \B}$.
	For this choice the players use their inputs:
	\begin{itemize}
		\item Edge $\set{(\End',\Top, \A, i), (\End', \Bot, \A, j)}$ is included in $G_{X,Y}$ iff $(i,j) \in X$, and
		\item Edge $\set{(\End',\Top, \B, i), (\End', \Bot, \B, j)}$ is included in $G_{X,Y}$ iff $(i,j) \in Y$.
	\end{itemize}
	By Lemma~\ref{lemma:subgraph}, the graph $G_{X,Y}$ includes a copy of $H_k$ as a subgraph iff
	$X \cap Y \neq \emptyset$.
	Thus, to solve their disjointness instance, Alice and Bob simulate the execution of $A$ in $G_{X,Y}$,
	and output ``$X \cap Y = \emptyset$'' iff $A$ rejects.

	Let us describe the simulation.
	We partition $V(G_{X,Y})$ into three parts:
	\begin{itemize}
		\item Alice's part, $V_A = \bigcup_{S \in \set{ \Top, \Bot}} \left( \End'_{S, \A} \times [n] \cup \Tri'_{S,\A} \right) \cup \Clique'_6 \cup \Clique'_8$,
		\item Bob's part, $V_B = \bigcup_{S \in \set{ \Top, \Bot}} \left( \End'_{S, \B} \times [n] \cup \Tri'_{S,\B} \right) \cup \Clique'_7 \cup \Clique'_9$,
		\item The shared part, $U$, comprising the remaining vertices, $U = \Tri'_{\Top, \Mid} \cup \Tri'_{\Bot, \Mid} \cup \Clique'_{10}$.
	\end{itemize}
	Note that each player knows all edges of $G_{X,Y}$, except those edges that are internal to the other player's part
	--- those are the only edges that depend on the other player's input.
	For example, Alice knows all edges in $\left( V_A \cup V_B \cup U \right) \times \left( V_A \cup U \right)$.

	The cut between $V_A$ and the rest of the graph, $V_B \cup U$, contains exactly two edges from each triangle $\Tri'_{S,i}$, and the six clique edges going from $(\Clique'_6, 0)$ or from $(\Clique'_8, 0)$ to another clique vertex $(\Clique'_s, 0)$ for $s \in \set{7,9,10}$.
	The size of this cut is $2 \cdot 2m + 6 = O(k n^{1/k})$.
	The cut between $V_B$ and $V_A \cup U$ is of the same size.

	To carry out the simulation, each player locally simulates the states of the nodes in its part of the graph, and also in the shared part.
	As we said above, each player knows all edges between the nodes it needs to simulate,
	so it knows which node is supposed to receive messages from which other nodes.

	To simulate one round of the algorithm, each player computes all the messages sent by the nodes it simulates, using shared randomness to agree on the random choices of the nodes in the shared part.
	Alice and Bob then send each other the messages sent on edges crossing the cut from their part of the graph to the rest of the graph.
	This costs $O( kn^{1/k} \cdot B)$ bits in total.
	The players then feed in the messages received by every node they simulate locally, and compute the next state of the node, again using
	shared randomness for the shared part.

	If $A$ runs for $R$ rounds, then the total cost of the simulation is $O(R \cdot kn^{1/k} \cdot B)$.
	Solving disjointness on $[n]^2$ requires $\Omega(n^2)$ bits.
	Thus, we must have
	\begin{equation*}
		R = \Omega\left( n^2 / \left( k n^{1/k} \cdot B  \right) \right) = \Omega( n^{2-1/k} / (Bk)).
	\end{equation*}

\end{proof}

\bibliographystyle{plain}
\bibliography{bibliography}
\end{document}